\newcommand*{\algrule}[1][\algorithmicindent]{\makebox[#1][l]{\hspace*{0.3em}\vrule height .75\baselineskip depth .25\baselineskip}}%
\def\ALG@printindent{%
		\ifnum \theALG@nested>0
		\ifx\ALG@text\ALG@x@notext
		\addvspace{-1.35pt}
		\else
		\unskip
		\ALG@printindent@tempcnta=1
		\loop
		\algrule[\csname ALG@ind@\the\ALG@printindent@tempcnta\endcsname]%
		\advance \ALG@printindent@tempcnta 1
		\ifnum \ALG@printindent@tempcnta<\numexpr\theALG@nested+1\relax
		\repeat
		\fi
		\fi
}%
\patchcmd{\ALG@doentity}{\noindent\hskip\ALG@tlm}{\ALG@printindent}{}{\errmessage{failed to patch}}
\newcolumntype{P}[1]{>{\centering\arraybackslash}p{#1}}
\newcolumntype{M}[1]{>{\centering\arraybackslash}p{#1}}
\renewcommand{\algorithmicrequire}{\textbf{Input: A link stream $L$ defined over a set of nodes $V$, a duration $\Delta$}}
\renewcommand{\algorithmicensure}{\textbf{Output: The set of all maximal \dclique{}s in $L$}}
\newtheorem{lemma}{Lemma}
\newtheorem{theorem}{Theorem}
\newcommand{\dclique}{$\Delta$-clique}
\newcommand{\clique}[3]{\ensuremath{(#1,[#2,#3])}}
\newcommand{\ie}{i.e.}
\author{Tiphaine Viard, Matthieu Latapy, Cl\'{e}mence Magnien}
\address{Sorbonne Universités, UPMC Univ Paris 06, 
	CNRS, LIP6 UMR 7606, \\
	4 place Jussieu 75005 Paris}
\ead{first.last@lip6.fr}
\journal{Theoretical Computer Science}
\begin{document}

\begin{frontmatter}

\title{Computing maximal cliques in link streams}

\begin{abstract}

A link stream is a collection of triplets $(t,u,v)$ indicating that an interaction occurred between $u$ and $v$ at time $t$. We generalize the classical notion of cliques in graphs to such link streams: for a given $\Delta$, a $\Delta$-clique is a set of nodes and a time interval such that all pairs of nodes in this set interact at least once during each sub-interval of duration $\Delta$. We propose an algorithm to enumerate all maximal (in terms of nodes or time interval) cliques of a link stream, and illustrate its practical relevance on a real-world contact trace.

\end{abstract}

\begin{keyword}
link streams, temporal networks, time-varying graphs, cliques, graphs, algorithms
\end{keyword}

\end{frontmatter}

\section{Introduction}
\label{introduction}

In a graph $G=(V,E)$ with $E \subseteq V\times V$, a clique $C\subseteq V$ is a set of nodes such that $C\times C \subseteq E$. In addition, $C$ is maximal if it is included in no other clique. In other words, a maximal clique is a set of nodes such that all possible links exist between them, and there is no other node linked to all of them. Enumerating maximal cliques of a graph is one of the most fundamental problems in computer science, and it has many applications~\cite{Rowe2007,Samudrala1998}.

A link stream $L = (T,V,E)$ with $T = [\alpha,\omega]$ and $E\subseteq T\times V\times V$ models interactions over time: $l = (t,u,v)$ in $E$ means that an interaction occurred between $u \in V$ and $v \in V$ at time $t \in T$. Link streams, also called temporal networks or time-varying graphs depending on the context, model many real-world data like contacts between individuals, email exchanges, or network traffic~\cite{Casteigts2011,Holme2011,Viard2014,Wehmuth2014}.

For a given duration $\Delta$, a \dclique\ $C$ of $L$ is a pair $C = \clique{X}{b}{e}$ with $X \subseteq V$ and $[b,e] \subseteq T$ such that $|X|\ge 2$,
and for all $\{u, v\} \subseteq X$ and $\tau \in [b,\max(e-\Delta,b)]$ there is a link $(t,u,v)$ in $E$ with $t \in [\tau,\min(\tau+\Delta,e)]$.
Notice that \dclique{}s necessarily have at least two nodes.


More intuitively, all nodes in $X$ interact at least once with all others at least every $\Delta$ from time $b$ to time $e$. 
\dclique{} $C$ is maximal if it is included in no other \dclique{},
(\ie{} there exists no \dclique{} $C' = \clique{X'}{b'}{e'}$ such that $C'\ne C$, $X \subseteq X'$ and $[b,e]\subseteq [b',e']$). See Figure~\ref{fig:example} for an example.

\begin{figure}

\centering

\includegraphics[width=.9\linewidth]{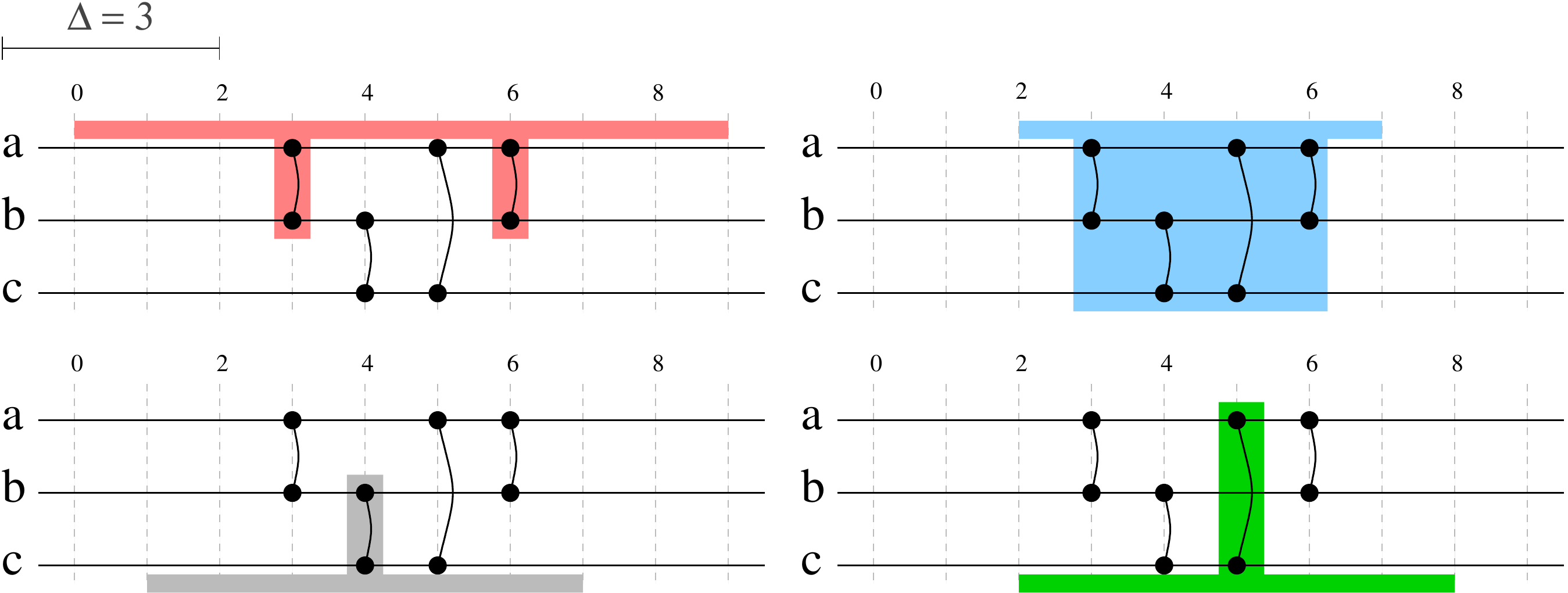}

\caption{Examples of \dclique s. We consider the link stream $L = ([0,9],\{a,b,c\},E)$ with $E = \{(3,a,b), (4,b,c), (5,a,c), (6,a,b)\}$ and $\Delta = 3$. There are four maximal $3$-cliques in $L$: $(\{a,b\},[0,9])$ (top left), $(\{a,b,c\}, [2,7])$ (top right), $(\{b,c\}, [1,7])$ (bottom left), and $(\{a,c\}, [2,8])$ (bottom right). Notice that $(\{a,b,c\},[1,7])$ is not a \dclique\ since during time interval $[1,4]$ of duration $\Delta=3$ there is no interaction between $a$ and $c$. Notice also that $(\{a,b\},[1,9])$, for instance, is not maximal: it is included in $(\{a,b\},[0,9])$.}

\label{fig:example}
\end{figure}

In real-world situations like the ones cited above, $\Delta$-cliques are signatures of meetings, discussions, or distributed applications for instance. Moreover, just like cliques in a graph correspond to its subgraphs of density $1$, $\Delta$-cliques in a link stream correspond to its substreams of $\Delta$-density $1$, as defined in~\cite{Viard2014}. Therefore, $\Delta$-cliques in link streams are natural generalizations of cliques in graphs.

In this paper, we propose the first algorithm for listing all maximal \dclique s of a given link stream. We illustrate the relevance of the concept and algorithm by computing maximal \dclique s of a real-world dataset.

Before entering in the core of the presentation, notice that we consider here undirected links only: given a link stream $L = (T,V,E)$, we make no distinction between $(t,u,v) \in E$ and $(t,v,u) \in E$. Likewise, we suppose that there is no loop $(t,v,v)$ in $E$, and no isolated node ($\forall v\in V,\ \exists (t,u,v)\in E$).

We finally define the first occurrence time of $(u,v)$ after $b$ as the smallest time $t \ge b$ such that $(t,u,v)\in E$, and we denote it by $f_{buv}$. Conversely we denote the last occurrence time of $(u,v)$ before $e$ by $l_{euv}$. We say that a link $(t,u,v)$ is in $C = \clique{X}{b}{e}$ if $u \in X$, $v \in X$ and $t \in [b,e]$.

\section{Algorithm}

\label{sec:algorithm}

One may trivially enumerate all maximal cliques in a graph as follows.
One maintains a set $M$ of previously found cliques (maximal or not),
as well as a set $S$ of candidate cliques.
Then for each clique $C$ in $S$, one removes $C$ from $S$ and searches for nodes outside $C$ connected to all nodes in clique $C$, thus obtaining new cliques (one for each such node) larger than $C$. If one finds no such node, then clique $C$ is maximal and it is part of the output.
Otherwise, if the newly found cliques have  not already been found
(\ie{}, they do not belong to $M$),
then one adds them  to $S$ and $M$.
The set $S$ is initialized with the trivial cliques containing only one node, and all maximal cliques have been found when $S$ is empty.
The set $M$ is used for memorization, 
and ensures that one does not examine the same clique more than once.
In \cite{Johnson1988} the authors use this framework to enumerate all maximal cliques of a graph in lexicographic order.

Our algorithm for finding \dclique s in link stream $L = (T,V,E)$ (Algorithm~\ref{alg:dcliques}) relies on the same scheme.
We initialize the set $S$ of candidate \dclique{}s and the set $M$ of all found \dclique{}s
with the trivial \dclique s $\clique{\{a,b\}}{t}{t}$ for all $(t,a,b)$ in $E$ (Line~\ref{alg:init_state}).
Then, until $S$ is empty ({\em while} loop of Lines~\ref{alg:begin_loop} to~\ref{alg:end_loop}),
we pick an element $\clique{X}{b}{e}$ in $S$ (Line~\ref{alg:get_clique}) and search for nodes $v$ outside $X$ such that $\clique{X\cup\{v\}}{b}{e}$\ is a \dclique\ (Lines~\ref{alg:add_node_begin} to~\ref{alg:add_node_end}).
We also look for a value $b'<b$ such that $\clique{X}{b'}{e}$ is a \dclique\ (Lines~\ref{alg:get_f} to~\ref{alg:add_clique_time_b}), and likewise a value $e'>e$ such that $\clique{X}{b}{e'}$ is a \dclique\ (Lines~\ref{alg:get_l} to~\ref{alg:add_clique_time_e}).
If we find such a node, such a $b'$ or such an $e'$,
then \dclique{} $C$ is not maximal and we add to $S$ and $M$ the new \dclique{}s larger than $C$ we just found
(Lines~\ref{alg:add_clique_node},~\ref{alg:add_clique_time_b} and~\ref{alg:add_clique_time_e}),
on the condition that they had not already been seen (\ie{}, they do not belong to $M$).
Otherwise, $C$ is maximal and is part of the output (Line~\ref{alg:add_c_r}).

\begin{algorithm}
\caption{Maximal \dclique s of a link stream}
\label{alg:dcliques}
\renewcommand{\algorithmicrequire}{\textbf{input:} a link stream $L = (T,V,E)$ and a duration $\Delta$}
\renewcommand{\algorithmicensure}{\textbf{output:} the set of all maximal \dclique{}s in $L$}
\algorithmicrequire\\
\algorithmicensure

\begin{algorithmic}[1]
\State $S \gets \emptyset$, $R \gets \emptyset$, $M \gets \emptyset{}$
\State for $(t,u,v) \in E$: add $\clique{\{u,v\}}{t}{t}$ to $S$ and to $M$ \label{alg:init_state}
\While{$S \ne \emptyset$} \label{alg:begin_loop}
 \State take and remove $\clique{X}{b}{e}$ from $S$ \label{alg:get_clique}
 \State set isMax to True
 \For{$v$ in $V\setminus X$}\label{alg:add_node_begin}
  \If{$\clique{X\cup \{v\}}{b}{e}$ is a $\Delta$-clique}\label{alg:check_clique_node}
   \State set isMax to False \label{alg:nodeismaxfalse}
   \If{$\clique{X\cup \{v\}}{b}{e}$ not in $M$} \label{alg:ifXvinM}
    \State add $\clique{X\cup \{v\}}{b}{e}$ to $S$ and $M$ \label{alg:add_clique_node} 
   \EndIf
  \EndIf
 \EndFor \label{alg:add_node_end}
 \State $f \gets \max_{u,v\in X} f_{buv}$ \label{alg:get_f} \Comment{{\em\footnotesize latest first occurrence time of a link in \clique{X}{b}{e}}}
 \State set $b'$ to $f - \Delta$\label{alg:b_default}
 \If{$b \ne b'$} \label{alg:check_b_f}
  \State set isMax to False \label{bismaxfalse}
  \If{$\clique{X}{b'}{e}$ not in $M$}~\label{alg:ifbpinM}
    \State add $\clique{X}{b'}{e}$ to $S$ and $M$ \label{alg:add_clique_time_b}
  \EndIf
 \EndIf
 \State $l \gets \min_{u,v\in X} l_{euv}$ \label{alg:get_l} \Comment{{\em\footnotesize earliest last occurrence time of a link in \clique{X}{b}{e}}}
 \State set $e'$ to $l + \Delta$
 \If{$e \ne e'$}\label{alg:check_e_l}
   \State set isMax to False \label{alg:eismaxfalse}
   \If{$\clique{X}{b}{e'}$ not in $M$}
     \State add $\clique{X}{b}{e'}$ to $S$ and $M$ \label{alg:add_clique_time_e}
   \EndIf
 \EndIf
 \If{isMax}\label{alg:test_max}
  \State add $\clique{X}{b}{e}$ to $R$\label{alg:add_c_r}
 \EndIf
\EndWhile\label{alg:end_loop}
\State \Return $R$
\end{algorithmic}
\end{algorithm}

\label{algorithm}

Let us explain the choice of $b'$ (Lines~\ref{alg:get_f} to~\ref{alg:add_clique_time_b}) in details, the choice of $e'$ (Lines~\ref{alg:check_e_l} to~\ref{alg:add_clique_time_e}) being symmetrical. 
For a given \dclique{} $\clique{X}{b}{e}$,
we set $b'$ to $f-\Delta$, which is {the smallest time} such that we are sure that $\clique{X}{b'}{e}$ is a \dclique{}
without inspecting any link outside of \clique{X}{b}{e}.
Indeed, 
all links in $X\times X$ appear at least once in the interval $[f-\Delta, f]$:
$f$ is the latest of the first occurrence times of all links in this \dclique{}, and
so all links appear at least once in $[b,f] \subseteq [f-\Delta, f]$. 
If $b' \not = b$, then the \dclique{} \clique{X}{b'}{e} is added to $S$ (Line~\ref{alg:check_b_f}).

We display in Figure~\ref{fig:procedure} an example of a sequence of such operations from an initial trivial \dclique{} to a maximal \dclique{} in an illustrative link stream.
The algorithm builds this way a set of \dclique s of $L$, which we call the {\em configuration space};
we display the configuration space for this simple example in Figure~\ref{fig:configspace} together with the relations induced by the algorithm between these \dclique s.

\begin{figure}[htbp]

\centering

\includegraphics[width=0.9\linewidth]{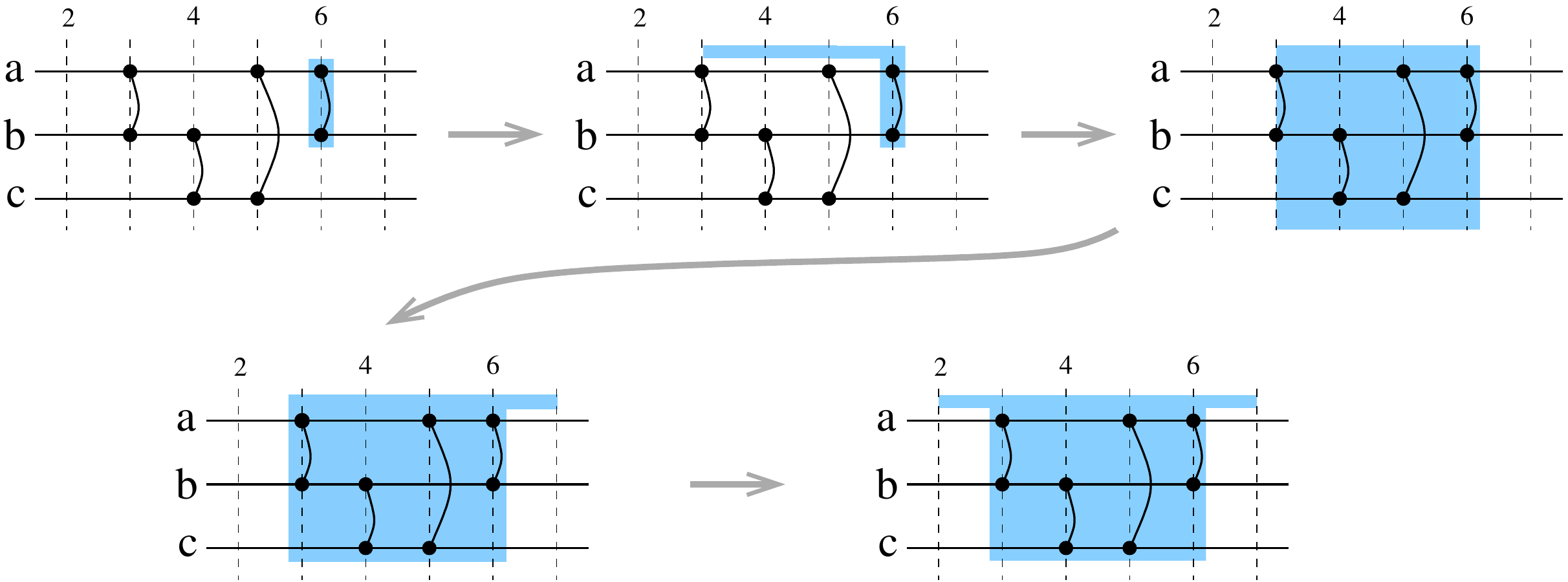}

\caption{A sequence of \dclique s built by our algorithm to find a maximal \dclique\ (bottom-right) from an initial trivial \dclique\ (top-left) in the link stream of Figure~\ref{fig:example} when $\Delta=3$.\\
From left to right and top to bottom: the algorithm starts with $\clique{\{a,b\}}{6}{6}$, and finds $\clique{\{a,b\}}{3}{6}$ thanks to Lines~\ref{alg:get_f} to~\ref{alg:add_clique_time_b} of the algorithm.
It then finds $\clique{\{a,b,c\}}{3}{6}$ thanks to Lines~\ref{alg:add_node_begin} to~\ref{alg:add_clique_node}.
It finds $\clique{\{a,b,c\}}{3}{7}$ from Lines~\ref{alg:get_l} to~\ref{alg:add_clique_time_e},
and finally $\clique{\{a,b,c\}}{2}{7}$ from Lines~\ref{alg:get_f} to~\ref{alg:add_clique_time_b}.}

\label{fig:procedure}

\end{figure}

\begin{figure}[!t]

  \centering

  \includegraphics[width=0.9\linewidth]{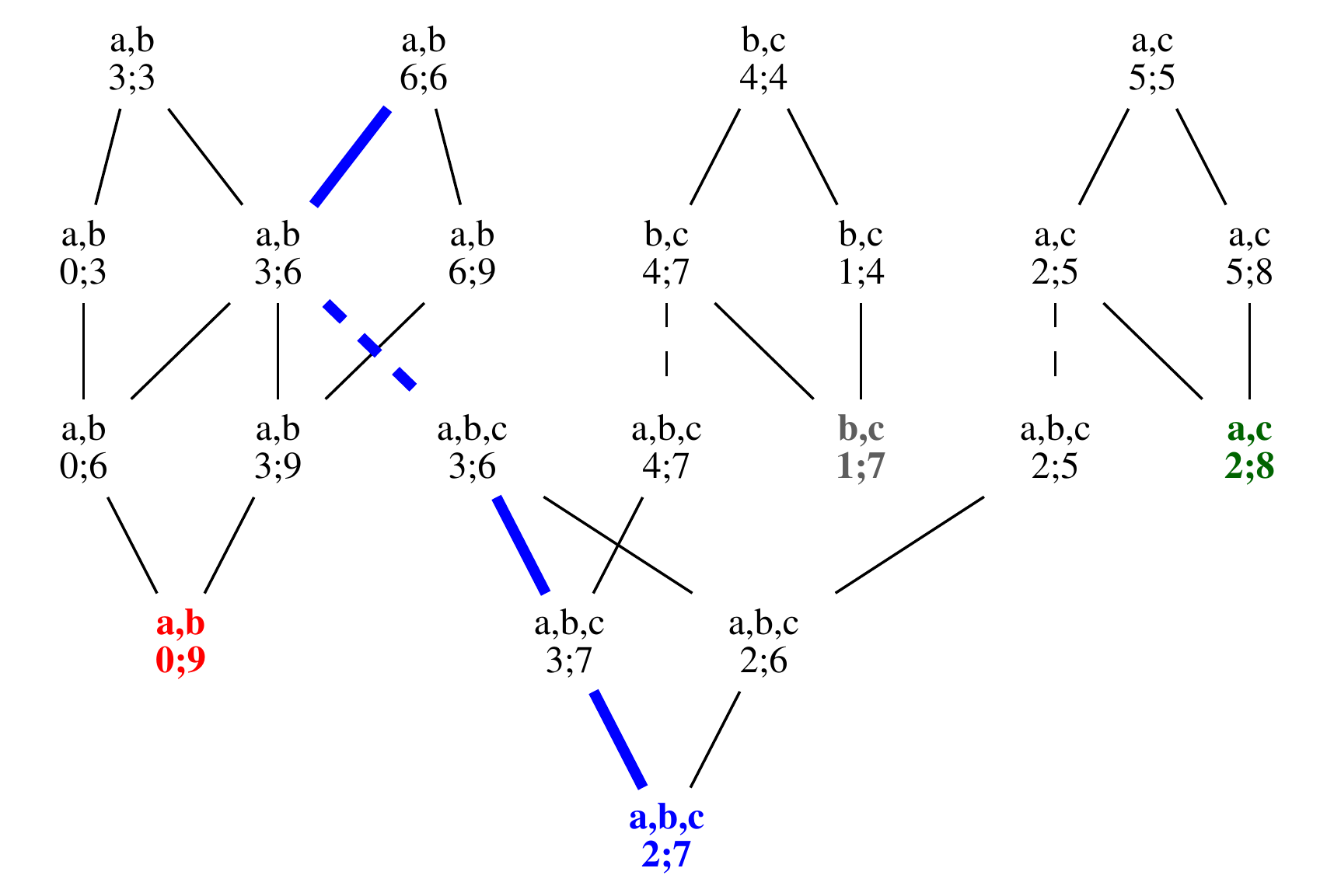}

  \caption{The configuration space built by our algorithm from the link stream of Figures~\ref{fig:example} and~\ref{fig:procedure} when $\Delta=3$. Each element is a \dclique\ and it is linked to the \dclique s the algorithm builds from it (links are implicitly directed from top to bottom).
Plain links indicate \dclique s discovered by Lines~\ref{alg:get_f} to~\ref{alg:add_clique_time_b}
or Lines~\ref{alg:get_l} to~\ref{alg:add_clique_time_e} of the algorithm,
which change the time span of the clique. Dotted links indicate \dclique s discovered by Lines~\ref{alg:add_node_begin} to~\ref{alg:add_node_end}, which change the set of nodes involved in the clique. The bold path is the one detailed in Figure~\ref{fig:procedure}. Colors correspond to the maximal \dclique s displayed in Figure~\ref{fig:example}.}

  \label{fig:configspace}

\end{figure}

To prove the validity of Algorithm~\ref{alg:dcliques},
we must show that all the elements it outputs are \dclique{}s, that they are maximal,
and that all maximal \dclique{}s are in its output.

\begin{lemma}[]\label{lemma:clique_transfo}
In Algorithm~\ref{alg:dcliques}, all elements of S are \dclique s of $L$.
\label{lemma1}
\end{lemma}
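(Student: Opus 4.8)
The statement to prove is that every element added to $S$ during the execution of Algorithm~\ref{alg:dcliques} is a genuine $\Delta$-clique of $L$. Since $S$ only gains elements at three places — the initialization on Line~\ref{alg:init_state}, and the three additions on Lines~\ref{alg:add_clique_node}, \ref{alg:add_clique_time_b}, and~\ref{alg:add_clique_time_e} — the natural approach is a proof by induction on the order in which elements are inserted into $S$, verifying the claim separately for each insertion point.

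Let me sketch the plan. For the base case I would check the initialization: each $\clique{\{u,v\}}{t}{t}$ with $(t,u,v)\in E$ is trivially a $\Delta$-clique, because $|\{u,v\}|=2$, the only pair is $\{u,v\}$, the interval $[t,\max(t-\Delta,t)]=\{t\}$ is the single point $\tau=t$, and the required link $(t,u,v)$ exists with $t\in[t,\min(t+\Delta,t)]=\{t\}$. For the inductive step I would assume that the element $\clique{X}{b}{e}$ taken from $S$ on Line~\ref{alg:get_clique} is a $\Delta$-clique (this is the induction hypothesis, since it was added to $S$ earlier), and then verify that each of the three kinds of element the algorithm may add from it is again a $\Delta$-clique. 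The node-addition case (Line~\ref{alg:add_clique_node}) is immediate because Line~\ref{alg:check_clique_node} explicitly tests that $\clique{X\cup\{v\}}{b}{e}$ is a $\Delta$-clique before adding it, so nothing further is needed.

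The two interesting cases are the time-extension additions, and by the symmetry noted in the text I would treat the $b'$ case (Line~\ref{alg:add_clique_time_b}) in full and remark that the $e'$ case (Line~\ref{alg:add_clique_time_e}) is analogous. Here I would invoke the justification already given in the excerpt: setting $b'=f-\Delta$ where $f=\max_{u,v\in X} f_{buv}$, I must show $\clique{X}{b'}{e}$ is a $\Delta$-clique. Because $\clique{X}{b}{e}$ is already a $\Delta$-clique by hypothesis, the covering condition holds on $[b,\max(e-\Delta,b)]$; the only new part of the interval to check is $[b',b]$, i.e.\ sub-intervals $[\tau,\tau+\Delta]$ with $\tau\in[b',b]$. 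The key observation is that for every pair $\{u,v\}\subseteq X$, the first occurrence time $f_{buv}$ satisfies $f_{buv}\le f=b'+\Delta$, and $f_{buv}\ge b\ge b'$, so the link $(f_{buv},u,v)$ lies in $[b',b'+\Delta]\subseteq[\tau,\tau+\Delta]$ for the smallest sub-intervals, and more generally one checks that each required sub-interval $[\tau,\min(\tau+\Delta,e)]$ contains an occurrence of $(u,v)$.

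The step I expect to be the main obstacle is making this last interval-covering argument fully rigorous for the new region $[b',b]$: one must confirm that for \emph{every} $\tau$ in the extended range and \emph{every} pair $\{u,v\}\subseteq X$, some link $(t,u,v)$ falls in $[\tau,\min(\tau+\Delta,e)]$, rather than only checking the single sub-interval anchored at $b'$. The clean way to handle this is to split on whether $\tau\le b$ or $\tau>b$: for $\tau>b$ the condition follows directly from the induction hypothesis that $\clique{X}{b}{e}$ is a $\Delta$-clique, while for $\tau\in[b',b]$ one uses that $f_{buv}\in[b,f]\subseteq[\tau,\tau+\Delta]$ together with $f_{buv}\le e$ (which holds because $\clique{X}{b}{e}$ being a $\Delta$-clique forces a link of each pair inside $[b,e]$, so $f_{buv}\le e$), ensuring the occurrence at $f_{buv}$ lies within $[\tau,\min(\tau+\Delta,e)]$. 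Once this case analysis is laid out, the remaining verifications are routine.
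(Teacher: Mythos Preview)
Your proposal is correct and follows essentially the same approach as the paper: induction over the iterations of the main loop, with the base case given by the trivial \dclique{}s at initialization, the node-extension case disposed of by the explicit test at Line~\ref{alg:check_clique_node}, and the real work confined to the time-extension case. The only cosmetic difference is that the paper details the symmetric $e' = l+\Delta$ case rather than your $b' = f-\Delta$ case, and its interval argument is terser (it simply observes that every pair has a link in $[l,e]\subseteq[l,l+\Delta]$ and that the \dclique{} property already covers $[b,l]$), whereas you spell out the $\tau$-by-$\tau$ case split more explicitly; the underlying idea is the same.
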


\begin{proof}

We prove the claim by induction on the iterations of the {\em while} loop (Lines~\ref{alg:begin_loop} to~\ref{alg:end_loop}).

Initially, all elements of $S$ are \dclique s (Line~\ref{alg:init_state}).
Let us assume that all the elements of $S$ are \dclique s at the $i$-th iteration of the loop (induction hypothesis). The loop may add new elements to $S$ at Lines~\ref{alg:add_clique_node}, \ref{alg:add_clique_time_b} and~\ref{alg:add_clique_time_e}.
In all cases, the added element is built from an element $C=\clique{X}{b}{e}$ of $S$ (Line~\ref{alg:get_clique}), which is a \dclique\ by induction hypothesis.

It is trivial (from the test at Line~\ref{alg:check_clique_node}) that Line~\ref{alg:add_clique_node} only adds \dclique s.

Let us show that $(X,[b,l+\Delta])$, 
where $l$ is computed in Line~\ref{alg:get_l}, necessarily is a \dclique. 
As $\clique{X}{b}{e}$ is a \dclique\ all links in $X\times X$ appear at least once every $\Delta$ from $b$ to $l\le e$.
Moreover, since $l$ is the earliest last occurrence time of a link in $C$,
for all $u$ and $v$ in $X$ there is necessarily a link $(t,u,v)$ in $E$ with $l\le t\le e$.
Notice also that $l \ge e-\Delta$, otherwise $\clique{X}{b}{e}$ would not be a \dclique. 
Therefore a link between $u$ and $v$ occurs at least once between 
$l$ and $l+\Delta$
for all $u$ and $v$ in $X$.
Finally, $\clique{X}{b}{l+\Delta}$ is a \dclique{}.

The same arguments hold for Line~\ref{alg:get_f}.

Finally, at the end of the $(i+1)$-th iteration of the loop, all the elements of $S$ are \dclique s, which ends the proof.
\end{proof}

\begin{lemma}\label{lemma:maximal}
All the elements of the set returned by Algorithm~\ref{alg:dcliques} are maximal \dclique s of $L$.
\end{lemma}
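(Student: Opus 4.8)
The plan is to dispatch the easy half immediately and then concentrate on maximality. Every element added to $R$ at Line~\ref{alg:add_c_r} is the \dclique{} $\clique{X}{b}{e}$ just taken from $S$ at Line~\ref{alg:get_clique}, so by Lemma~\ref{lemma1} it already is a \dclique{}; only maximality needs work. The combinatorial content is that $\clique{X}{b}{e}$ reaches Line~\ref{alg:add_c_r} only if \texttt{isMax} was never falsified, which is equivalent to three conditions: (A) for every $v\in V\setminus X$, $\clique{X\cup\{v\}}{b}{e}$ is not a \dclique{} (Line~\ref{alg:check_clique_node}); (B) $b=f-\Delta$ with $f=\max_{u,v\in X}f_{buv}$; and (C) $e=l+\Delta$ with $l=\min_{u,v\in X}l_{euv}$. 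I would argue by contradiction: assume $\clique{X}{b}{e}$ is not maximal, so some \dclique{} $C'=\clique{X'}{b'}{e'}\ne\clique{X}{b}{e}$ satisfies $X\subseteq X'$ and $[b,e]\subseteq[b',e']$; then at least one of $X\subsetneq X'$, $b'<b$, $e'>e$ holds, and I will contradict (A), (B), (C) respectively.

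Before splitting into cases I would record one preliminary fact used throughout: condition (B) forces $e\ge b+\Delta$. Indeed, since $\clique{X}{b}{e}$ is a \dclique{}, taking $\tau=b$ shows every pair of $X$ has a link in $[b,\min(b+\Delta,e)]$, so $f\le\min(b+\Delta,e)$; combined with $f=b+\Delta$ from (B) this yields $b+\Delta\le e$. Symmetrically (C) also gives $e\ge b+\Delta$, so the time span is always at least $\Delta$, which is what makes the sliding windows of $C$ and of $C'$ compatible.

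For the node case ($v\in X'\setminus X$) I would show $\clique{X\cup\{v\}}{b}{e}$ is a \dclique{}, contradicting (A): pairs inside $X$ are handled because $\clique{X}{b}{e}$ is already a \dclique{}, and for a pair $\{u,v\}$ with $u\in X$ I use that $C'$ is a \dclique{} together with $e\ge b+\Delta$, which gives $[b,e-\Delta]\subseteq[b',e'-\Delta]$ and $\min(\tau+\Delta,e)=\tau+\Delta=\min(\tau+\Delta,e')$ for $\tau$ in this range, so the link guaranteed by $C'$ already lies in the required window. For the interval case (left; the right is symmetric via (C)), condition (B) provides a witness pair $\{u_0,v_0\}\subseteq X$ with $f_{bu_0v_0}=b+\Delta$, \ie{} no link of $(u_0,v_0)$ in $[b,b+\Delta)$. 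If $b'<b$, then dropping the extra nodes of $X'$ keeps $\clique{X}{b'}{e'}$ a \dclique{}; but placing a window of length $\Delta$ with left endpoint in $(\max(b',p),b)$, where $p$ is the last occurrence of $(u_0,v_0)$ before $b$, yields a window contained in $(p,b+\Delta)$ that contains no link of $(u_0,v_0)$ — and this $\tau$ lies in the admissible range because $e'\ge e\ge b+\Delta$. This contradicts $\clique{X}{b'}{e'}$ being a \dclique{}.

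The main obstacle I anticipate is precisely that \dclique{}s are \emph{not} monotone under shrinking the time interval: a short sub-interval can fail to contain a link that a longer one does, so one cannot simply assert that ``a sub-configuration of a \dclique{} is a \dclique{}''. The whole argument is organized around neutralizing this: the preliminary bound $e-b\ge\Delta$ (from (B) and (C)) is what aligns the window domains and lets links transfer between $C$ and $C'$ in the node case, while in the interval case the same bound guarantees that the value $\tau$ I construct to expose an empty window is genuinely admissible. A minor point to state cleanly is the existence of that gap window, which relies on $E$ being finite, so that $p$, the last occurrence before $b$, is well defined and strictly below $b$, making the interval $(\max(b',p),b)$ nonempty.
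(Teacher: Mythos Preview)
Your argument is correct and follows the same three-case contradiction skeleton as the paper (node extension, left time extension, right time extension). Two differences in execution are worth noting. First, you are more careful than the paper about why non-maximality reduces to those three single-coordinate extensions: you explicitly derive $e-b\ge\Delta$ from (B)/(C) and use it to transfer the $\Delta$-window condition from $C'$ down to $\clique{X\cup\{v\}}{b}{e}$ in the node case, whereas the paper simply asserts the trichotomy. Second, for the time-interval case the paper's route is shorter: instead of constructing an explicit empty window via the witness pair and the last prior occurrence $p$, the paper assumes (for the right end; the left is symmetric) a \dclique{} $\clique{X}{b}{e'}$ with $e'>e$ and, without loss of generality, no link of $X\times X$ in $(e,e']$; then every pair has a link in $[e'-\Delta,e]$, hence $l\ge e'-\Delta>e-\Delta$, directly contradicting (C). Your empty-window construction achieves the same contradiction but with more bookkeeping; the only loose end is the case where $(u_0,v_0)$ has no occurrence before $b$, where you should simply drop $p$ and take any $\tau\in(b',b)$.
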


\begin{proof}

Let $C=\clique{X}{b}{e}$ be an element of $R$ returned by the algorithm.
Only elements of $S$ are added to $R$ (at Line~\ref{alg:add_c_r}), and so according to Lemma~\ref{lemma1} $C$ is a \dclique.
Assume it is not maximal; then we are in one of the three following situations.

There exists $v$ in $V\setminus X$ such that $\clique{X\cup \{v\}}{b}{e}$ is a \dclique. Then $v$ is found at Lines~\ref{alg:add_node_begin}\textendash\ref{alg:check_clique_node},
and Line~\ref{alg:nodeismaxfalse} sets the boolean {\em isMax} to {\em false}. Therefore, Line~\ref{alg:test_max} ensures that $C=\clique{X}{b}{e}$ is not added to $R$, and we reach a contradiction.

There exists $e'>e$ such that $\clique{X}{b}{e'}$ is a \dclique\ and we assume without loss of generality that there is no link between nodes in $X$ from $e$ to $e'$.
Then, let us consider  $l \in [b,e]$, computed in Line~\ref{alg:get_l}, which
is the earliest last occurrence time of a link in $C$.
We necessarily have $l \ge e'-\Delta$ because  $\clique{X}{b}{e'}$ is a \dclique.
Since  $e'>e$, this implies
 $l > e-\Delta$.
As a consequence, the test in Line~\ref{alg:check_e_l} of the algorithm is satisfied, and Line~\ref{alg:eismaxfalse} sets the boolean {\em isMax} to {\em false}. Like above, we reach a contradiction.

If there exists $b'<b$ such that $\clique{X}{b'}{e}$ is a \dclique, then similarly to the previous case we reach a contradiction.

Finally, $C$ necessarily is maximal, which proves the claim.
\end{proof}

\newcommand{\ff}{{\ensuremath{s}}}

Before proving our main result, which is that all maximal \dclique{}s are returned
by the algorithm,
we need the following two intermediate results.

\begin{lemma}
  \label{lem:width}
  Let $C = \clique{X}{b}{e}$ be a maximal \dclique{} of $L$,
  and let $\ff$ be the earliest occurrence time of a link in $C$.
  Then $e \ge \ff{} + \Delta$.
\end{lemma}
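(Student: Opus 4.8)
The plan is to exploit the maximality of $C$ in the time direction in order to pin down $e$ in terms of the last occurrence times of the links of $C$, and then to observe that this quantity is at least $\ff$. Concretely, I would introduce $l$, the earliest last occurrence time of a link in $C$, i.e.\ $l = \min_{u,v\in X} l_{euv}$, exactly as computed in Line~\ref{alg:get_l} of the algorithm. The strategy is to show that maximality forces $e = l + \Delta$ and that $l \ge \ff$, from which $e = l + \Delta \ge \ff + \Delta$ follows immediately.

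First I would establish that $\clique{X}{b}{l+\Delta}$ is a \dclique. This is precisely the computation already carried out in the proof of Lemma~\ref{lemma1} for the extension $e' = l+\Delta$: all links of $X\times X$ appear at least every $\Delta$ from $b$ to $l$, and since $l$ is a last occurrence time each pair also has a link in $[l,l+\Delta]$, so the $\Delta$-clique condition holds on all of $[b,l+\Delta]$. I would therefore cite that argument rather than repeat it.

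Next I would use maximality. On one hand, applying the $\Delta$-clique condition near the right end of $C$ gives $l \ge e-\Delta$ (otherwise the pair realizing $l$ would have no link in the last sub-interval of duration $\Delta$), so $e \le l+\Delta$. On the other hand, $\clique{X}{b}{l+\Delta}$ is a \dclique\ with the same node set $X$ and with $[b,e]\subseteq[b,l+\Delta]$; if $e < l+\Delta$ this would be a strictly larger \dclique\ containing $C$, contradicting maximality. Hence $e = l+\Delta$. For the inequality $l \ge \ff$, the pair $\{u,v\}$ realizing $l = l_{euv}$ has a link at time $l$, which is a link in $C$, so by definition of $\ff$ as the earliest occurrence of a link in $C$ we get $\ff \le l$. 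Combining, $e = l+\Delta \ge \ff + \Delta$.

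The conceptual heart is the maximality argument giving $e = l+\Delta$, but the step I expect to be genuinely delicate is its interaction with the boundary of $T$: forming $\clique{X}{b}{l+\Delta}$ implicitly requires $l+\Delta \le \omega$, and if $C$ is pressed against the right end of $T$ this extension may not fit, so that a maximal clique could have $e = \omega < l+\Delta$ and the conclusion could fail. I would therefore need to argue that this case does not arise in the intended setting (or handle it separately). Modulo this point, the remaining pieces are short: the reuse of Lemma~\ref{lemma1}, the bound $e \le l+\Delta$, and the inequality $\ff \le l$ are all one-liners, and the final conclusion is a direct combination.
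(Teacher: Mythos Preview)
Your argument is correct, but it takes a longer detour than the paper's. The paper argues directly by contradiction: assume $e < \ff + \Delta$; since $C$ is a \dclique{}, every pair in $X$ has at least one link in $[\ff, e] \subset [\ff, \ff+\Delta]$, and this alone is enough to verify that $\clique{X}{b}{\ff+\Delta}$ is a \dclique{} strictly containing $C$, contradicting maximality. No auxiliary quantity beyond \ff{} is introduced, and the whole proof is three lines.

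Your route via $l = \min_{u,v\in X} l_{euv}$ does work and actually proves the stronger intermediate fact $e = l + \Delta$ for every maximal \dclique{}, a pleasant structural statement that mirrors exactly what the algorithm computes at Line~\ref{alg:get_l}. The price is that you must recycle the extension argument from the proof of Lemma~\ref{lemma1} and then separately check $\ff \le l$, whereas the paper gets away with a single direct extension built on \ff{} itself. The boundary issue you flag (whether $l+\Delta$ or $\ff+\Delta$ may exceed $\omega$) applies equally to both arguments; the paper simply does not address it.
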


\begin{proof}
  Since $C$ is a \dclique{} and by definition of \ff{}, for all $u, v$ in $X$ there exists at least one link $(t,u,v)$ such that $\ff{} \le t\le e$.  Assume $e < \ff + \Delta$; then for all $u,v$ in $X$ there also exists a link $(t,u,v)$ such that $\ff{} \le t \le e < \ff + \Delta$.  Therefore $\clique{X}{b}{\ff+\Delta}$ is a \dclique{} and $C$ is included in it, which means that $C$ is not maximal and we reach a contradiction.
\end{proof}

\begin{lemma}
\label{lem:path}
Let $C = \clique{X}{b}{e}$ be a maximal \dclique{} of $L$ and let $\ff$ be the earliest occurrence time of a link in $C$.
If 
$\clique{X}{\ff}{\ff+\Delta}$ is in $S$ at some stage of Algorithm~\ref{alg:dcliques}, then $C$ is in the set returned by the algorithm.
\end{lemma}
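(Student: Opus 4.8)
\section*{Proof proposal}

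The plan is to exhibit an explicit path in the configuration space that leads from the seed clique $\clique{X}{\ff}{\ff+\Delta}$ to $C=\clique{X}{b}{e}$, keeping the node set $X$ fixed throughout and using only the time-extension operations (Lines~\ref{alg:get_f}--\ref{alg:add_clique_time_b} and~\ref{alg:get_l}--\ref{alg:add_clique_time_e}). Concretely, I would first apply forward extensions (which fix the left endpoint at $\ff$) to climb from $\clique{X}{\ff}{\ff+\Delta}$ up to $\clique{X}{\ff}{e}$, and then apply backward extensions (which fix the right endpoint at $e$) to descend from $\clique{X}{\ff}{e}$ down to $\clique{X}{b}{e}$, the junction being the clique $\clique{X}{\ff}{e}$. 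Once such a path $C_0,C_1,\dots,C_k=C$ is in hand, the memoization set $M$ closes the argument: since $C_0=\clique{X}{\ff}{\ff+\Delta}$ is in $S$ by hypothesis, it is eventually removed and processed; processing $C_i$ generates $C_{i+1}$, which is added to $S$ and $M$ unless it already belongs to $M$, in which case it was added (hence also processed) earlier. Either way every $C_i$ is processed, so in particular $C$ is processed.

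It then remains to check that processing the maximal clique $C$ actually places it in $R$, i.e.\ that none of the three \emph{isMax} flags is cleared. No node $v$ can be added (Lines~\ref{alg:add_node_begin}--\ref{alg:check_clique_node}), since $\clique{X\cup\{v\}}{b}{e}$ would contradict the maximality of $C$. For the backward test (Line~\ref{alg:check_b_f}) I claim $b'=b$: the value $b'=f-\Delta$ always yields a \dclique{} $\clique{X}{b'}{e}$ by Lemma~\ref{lemma1}, so if $b'\ne b$ then $b'<b$ (the first-window argument gives $f\le b+\Delta$) and $C$ would be strictly contained in $\clique{X}{b'}{e}$, again contradicting maximality; the forward test is symmetric with $e'=e$. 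Hence \emph{isMax} stays true and $C$ is added to $R$ at Line~\ref{alg:add_c_r}.

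The substantive part is the convergence of the two extension phases. For the forward phase let $e_0=\ff+\Delta$ and $e_{j+1}=l_j+\Delta$ with $l_j=\min_{u,v\in X} l_{e_j uv}$. I would show the sequence is non-decreasing (each $\clique{X}{\ff}{e_j}$ has length at least $\Delta$ by Lemma~\ref{lem:width}, so its last $\Delta$-window forces $l_j\ge e_j-\Delta$) and bounded above by $e$. Boundedness needs a gluing observation: if $\clique{X}{b_1}{e_1}$ and $\clique{X}{b_2}{e_2}$ are \dclique s with $b_1\le b_2\le e_1\le e_2$ and $e_1-b_2\ge\Delta$, then $\clique{X}{b_1}{e_2}$ is a \dclique{}, because every $\Delta$-window of $[b_1,e_2]$ lies entirely in $[b_1,e_1]$ or in $[b_2,e_2]$. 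Applying this to $\clique{X}{b}{e}$ and a hypothetical overshoot $\clique{X}{\ff}{e_{j+1}}$ with $e_{j+1}>e$ (their overlap $[\ff,e]$ has length $\ge\Delta$ by Lemma~\ref{lem:width}) produces $\clique{X}{b}{e_{j+1}}\supsetneq C$, contradicting maximality; so $e_j\le e$ for all $j$. Being monotone and bounded over the finitely many link-derived values, the sequence stabilises. The backward phase is symmetric, and there boundedness is even immediate, since an overshoot below $b$ already contains $C$.

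The main obstacle is to prove that the greedy extension stabilises \emph{exactly} at $e$ (resp.\ $b$) and not at some premature fixed point $e^\star<e$. I would argue by contradiction: a fixed point means $\min_{u,v\in X} l_{e^\star uv}=e^\star-\Delta$, so some pair $\{u_0,v_0\}$ has its last occurrence before $e^\star$ at $e^\star-\Delta$ and no occurrence in $(e^\star-\Delta,e^\star]$. But $\clique{X}{\ff}{e}$ is a \dclique{} (it is a sub-interval of $C$ of length $\ge\Delta$, so its windows are among those of $C$), and taking $\tau$ just above $e^\star-\Delta$ in its definition forces a link of $\{u_0,v_0\}$ inside $[\tau,\tau+\Delta]$, whose only candidate would have to lie in $(e^\star-\Delta,\tau+\Delta]$, an interval that shrinks to $(e^\star-\Delta,e^\star]$ and hence contains no such link --- a contradiction. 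Thus no fixed point is possible strictly below $e$, forcing the forward sequence to reach $e$ exactly, and the backward case is identical. This local window-chasing around the critical pair is the delicate step and the one I would write out most carefully.
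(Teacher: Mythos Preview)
Your proposal is correct and follows essentially the same two-phase strategy as the paper: keep $X$ fixed, push the right endpoint forward from $\clique{X}{\ff}{\ff+\Delta}$ until it reaches $e$, then pull the left endpoint back to $b$. Your window-chasing contradiction at a premature fixed point $e^\star<e$ is exactly the paper's argument.

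Two minor differences are worth noting. First, you make the boundedness $e_j\le e$ explicit via a gluing lemma, whereas the paper simply asserts ``as $C$ is maximal we necessarily have $e_k<e$'' without spelling out why an overshoot is impossible; your version is more careful here. Second, the paper collapses your entire backward phase into a \emph{single} step: it observes that since $C$ is maximal there must be a pair $u,v$ whose first link in $C$ sits exactly at $b+\Delta$ (else $C$ would extend left), and since $\ff$ is the earliest link time in $C$ every pair has its first link after $\ff$ no later than $b+\Delta$; hence $f=\max_{u,v}f_{\ff uv}=b+\Delta$ and one application of Lines~\ref{alg:get_f}--\ref{alg:add_clique_time_b} jumps directly from $\clique{X}{\ff}{e}$ to $\clique{X}{b}{e}$. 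Your symmetric sequence works too, but the one-step argument is a neat shortcut. Your additional check that processing $C$ leaves \emph{isMax} true is not part of the paper's proof of this lemma; the paper defers it to Lemma~\ref{lemma:clique_decouverte}.
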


\begin{proof}
Assume $C_0 = \clique{X}{\ff}{\ff+\Delta}$ is in $S$ and consider the longest sequence of steps of Algorithm~\ref{alg:dcliques} of the form:
$C_0 \rightarrow C_1 \rightarrow \cdots \rightarrow C_k$ such that for all $i$ $C_i = \clique{X}{\ff}{e_i}$ with $e_{i+1}>e_i$. In other words, the algorithm builds $C_{i+1}$ from $C_i$ in Lines~\ref{alg:get_l} to~\ref{alg:add_clique_time_e}
(notice that $e\ge s+\Delta$ from Lemma~\ref{lem:width} and so $C_0$ is included in $C$).

We prove that $C_k = \clique{X}{\ff}{e}$ by contradiction. Assume this is false, and so that $e_k \not= e$. 
As $C$ is maximal, we then necessarily have $e_k < e$.
In addition, $e_k = l+\Delta$ where $l$ is the earliest last occurrence time of a link in $C_{k-1}$ computed at Line~\ref{alg:get_l}.
Since $C_k$ is the last \dclique{} in the sequence, 
$l$ is also the earliest last occurrence time of a link in $C_k$
(otherwise there would be a clique $C_{k+1}$ satisfying the constraints of the sequence above).
Therefore there exist $u,v\in X$ such that $(l,u,v)\in E$ and such that
there is no occurrence of a link $(u,v)$ between $l$ and $e_k = l+\Delta$.
This ensures that there exists an $\epsilon$ such that 
$l + \Delta + \epsilon < e$ and such that there is no link between $u$ and $v$ from $l+ \epsilon$ to $l + \Delta + \epsilon$, which contradicts the assumption that $C$ is a \dclique{}.

We now show that the algorithm builds $C$ from $C_k$ to end the proof. Since $C$ is maximal, there exists $u,v \in X$ such that $(b+\Delta, u,v) \in E$ and such that there is no other link between $u$ and $v$ from $b$ to $b+\Delta$. By definition of $\ff$, $b+\Delta \ge \ff$. Therefore the latest first occurrence time of a link in $C_k$, $f$, is equal to $b+\Delta$ and Lines~\ref{alg:get_f} to~\ref{alg:add_clique_time_b} build $C$ from $C_k$.
\end{proof}

\begin{lemma}[]\label{lemma:clique_decouverte}
All maximal \dclique s of $L$ are in the set returned by Algorithm~\ref{alg:dcliques}. 
\end{lemma}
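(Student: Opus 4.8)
The plan is to prove that every maximal $\Delta$-clique $C = \clique{X}{b}{e}$ is returned by the algorithm, leveraging Lemma~\ref{lem:path}. That lemma reduces the goal to a single sufficient condition: it suffices to show that the ``seed'' clique $C_0 = \clique{X}{\ff}{\ff+\Delta}$, where $\ff$ is the earliest occurrence time of a link in $C$, eventually enters the set $S$ at some stage of the algorithm. Lemma~\ref{lem:width} guarantees $e \ge \ff + \Delta$, so $C_0$ is genuinely a sub-clique of $C$ and the reduction is well-posed. Thus the entire burden of the proof shifts onto reaching $C_0$.

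First I would fix an arbitrary maximal $\Delta$-clique $C = \clique{X}{b}{e}$ and set $\ff$ to be the earliest occurrence time of any link in $C$. I would identify a pair $\{u,v\} \subseteq X$ realizing this earliest time, so that $(\ff, u, v) \in E$; by the initialization at Line~\ref{alg:init_state}, the trivial clique $\clique{\{u,v\}}{\ff}{\ff}$ is placed in $S$ (and $M$) at the outset. The core of the argument is then to exhibit a sequence of algorithm steps that transforms this two-node, zero-width trivial clique into $C_0 = \clique{X}{\ff}{\ff+\Delta}$. I would proceed in two phases. In the first phase, starting from $\clique{\{u,v\}}{\ff}{\ff}$, I would repeatedly add the remaining nodes of $X$ one at a time (Lines~\ref{alg:add_node_begin}--\ref{alg:add_clique_node}), arguing that at width zero each intermediate node set $Y \subseteq X$ yields a valid $\Delta$-clique $\clique{Y}{\ff}{\ff}$, since any single instant trivially satisfies the $\Delta$-density condition; this builds up to $\clique{X}{\ff}{\ff}$. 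In the second phase, I would use the time-extension steps (Lines~\ref{alg:get_l}--\ref{alg:add_clique_time_e}) to grow the right endpoint from $\ff$ up to $\ff + \Delta$, checking that $\clique{X}{\ff}{\ff+\Delta}$ is indeed a $\Delta$-clique because every pair in $X$ has a link in $[\ff, e]$ and one can show all pairs are covered within $[\ff, \ff+\Delta]$.

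The main obstacle I anticipate is the memorization subtlety introduced by the set $M$. The cleanest sequence of transformations described above might attempt to add a clique that has already been discovered along a different path, in which case the guard at Lines~\ref{alg:ifXvinM}, \ref{alg:ifbpinM} and the analogous test for $e'$ prevents re-insertion into $S$. So I cannot simply assert that my chosen sequence of insertions executes verbatim; instead I must argue that regardless of the order in which cliques are discovered, the target $C_0$ nonetheless lands in $S$ at \emph{some} point. The natural way to handle this is to observe that any clique added to $M$ was, at the moment of its addition, also added to $S$, so it suffices to show $C_0$ is ever inserted into $M$; and since each clique in my constructed chain is a valid $\Delta$-clique reachable by a legal algorithm step from its predecessor, an induction on the chain shows each member enters $M$ (hence $S$) exactly once. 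Care is needed to confirm that the intermediate cliques $\clique{Y}{\ff}{\ff}$ and the widening cliques are themselves reachable and satisfy the relevant tests, but none of these checks is deep.

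Once $C_0 \in S$ is established, Lemma~\ref{lem:path} immediately yields that $C$ is in the returned set, completing the proof; combined with Lemma~\ref{lemma:maximal} (soundness) and Lemma~\ref{lemma1} (every element of $S$ is a $\Delta$-clique), this establishes full correctness of Algorithm~\ref{alg:dcliques}. I would close by remarking that, since $C$ was an arbitrary maximal $\Delta$-clique, all of them appear in the output, which is exactly the claim.
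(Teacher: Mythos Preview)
Your reduction to Lemma~\ref{lem:path} is exactly right, and the outer shape of the argument matches the paper's. The gap is in Phase~1. You claim that for any $Y\subseteq X$ the pair $\clique{Y}{\ff}{\ff}$ is a \dclique{} because ``any single instant trivially satisfies the $\Delta$-density condition.'' This is false. Unpacking the definition at $b=e=\ff$: the only admissible $\tau$ is $\ff$, and the window $[\tau,\min(\tau+\Delta,e)]$ collapses to the single point $\{\ff\}$. Hence $\clique{Y}{\ff}{\ff}$ is a \dclique{} only if \emph{every} pair in $Y$ has a link at the exact instant $\ff$. But $\ff$ is merely the earliest link time in $C$, realized by one pair $\{u,v\}$; nothing forces the other pairs of $X$ to have a link precisely at $\ff$. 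So the node-addition test at Line~\ref{alg:check_clique_node} will in general fail, and your chain through zero-width cliques does not exist.

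The fix is to reverse the order of your two phases, which is what the paper does: first extend the two-node clique $\clique{\{u,v\}}{\ff}{\ff}$ in time to $\clique{\{u,v\}}{\ff}{\ff+\Delta}$ (a single application of Lines~\ref{alg:get_l}--\ref{alg:add_clique_time_e}, since $l_{\ff uv}=\ff$), and only then add the remaining nodes of $X$ one by one. This works because for every $Y\subseteq X$ the pair $\clique{Y}{\ff}{\ff+\Delta}$ \emph{is} a \dclique{}: by Lemma~\ref{lem:width} we have $\ff+\Delta\le e$, so $\ff$ is a valid $\tau$ in the original maximal clique $C$, and the \dclique{} property of $C$ then guarantees a link for every pair of $X$ inside $[\ff,\ff+\Delta]$. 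Your treatment of the memorization issue with $M$ is fine and applies unchanged to this corrected chain.
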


\begin{proof}
It is easy to check that if $S$ contains a maximal \dclique\ then it is added to the set $R$ returned by the algorithm, and only these \dclique{}s are added to $R$.
We therefore show that all maximal \dclique s are in $S$ at some stage. 

Let $C = \clique{X}{b}{e}$ be a maximal \dclique{} of $L$, let $\ff$ be the earliest occurrence time of a link in $C$, and let $u,v\in X$ be two nodes such that there exists a link between them at $\ff$ (\ie{}, $(\ff, u,v) \in E$).
We show that there is a sequence of steps of the algorithm that builds $C$ from \dclique{} $C_0 = \clique{\{u,v\}}{\ff}{\ff}$ (which is placed in $S$ at the beginning of the algorithm, Line~\ref{alg:init_state}).

Lines~\ref{alg:get_l} to~\ref{alg:add_clique_time_e} builds $C_1 = \clique{\{u,v\}}{\ff}{\ff+\Delta}$ from $C_0$.

Notice that for all subset $Y$ of $X$, $\clique{Y}{\ff}{\ff+\Delta}$ is a \dclique{}. Therefore the algorithm iteratively adds all elements of $X$ at Lines~\ref{alg:add_node_begin} to~\ref{alg:add_node_end}, finally obtaining $C'=\clique{X}{\ff}{\ff+\Delta}$ from $C_1$.

We finally apply Lemma~\ref{lem:path} to conclude that the algorithm builds $C$ from $C'$.

\end{proof}

From these lemmas, we finally obtain the following result.

\begin{theorem}
Given a link stream $L$ and a duration $\Delta$, Algorithm~\ref{alg:dcliques} computes the set of all maximal \dclique s of $L$.
\end{theorem}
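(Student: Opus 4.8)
The plan is to notice that this theorem is simply the conjunction of three properties of the algorithm: that it terminates, that everything it returns is a maximal \dclique{} of $L$, and that every maximal \dclique{} of $L$ is returned. The last two are precisely Lemmas~\ref{lemma:maximal} and~\ref{lemma:clique_decouverte}, so the only genuinely new ingredient is termination, which I would establish first before combining everything.

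For termination, I would argue that only finitely many distinct \dclique{}s can ever be inserted into $S$. Since $E$ is finite, the node set $V$ is finite and there are finitely many occurrence times of links. Every node set $X$ appearing in a candidate is a subset of $V$, and an inspection of the algorithm shows that every time value $b$ or $e$ appearing in a candidate is either an occurrence time $t$ (for the trivial cliques of Line~\ref{alg:init_state}), or of the form $f-\Delta$ with $f$ a first occurrence time (Lines~\ref{alg:get_f} to~\ref{alg:add_clique_time_b}), or of the form $l+\Delta$ with $l$ a last occurrence time (Lines~\ref{alg:get_l} to~\ref{alg:add_clique_time_e}). Hence $b$ and $e$ range over finite sets, and the collection of \dclique{}s the algorithm can possibly produce is finite.

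Next I would use the memorization set $M$ to bound the number of iterations. An element is added to $S$ only together with being added to $M$, and only when it is not already in $M$ (the tests at Lines~\ref{alg:ifXvinM} and~\ref{alg:ifbpinM}, and the analogous test guarding Line~\ref{alg:add_clique_time_e}); since elements are never removed from $M$, each \dclique{} is inserted into $S$ at most once. Each iteration of the {\em while} loop removes exactly one element from $S$ (Line~\ref{alg:get_clique}), so the total number of iterations is at most the number of distinct \dclique{}s that enter $M$, which is finite by the previous paragraph. Therefore the loop terminates and the algorithm returns $R$.

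Finally I would combine the pieces into the set equality. By Lemma~\ref{lemma:maximal}, every element of the returned set $R$ is a maximal \dclique{} of $L$, giving one inclusion; by Lemma~\ref{lemma:clique_decouverte}, every maximal \dclique{} of $L$ belongs to $R$, giving the reverse inclusion. Together these show that $R$ is exactly the set of all maximal \dclique{}s of $L$. I expect the main obstacle to be the termination argument, and within it the claim that the time endpoints take only finitely many values; everything else is a direct appeal to the already established lemmas.
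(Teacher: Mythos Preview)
Your proposal is correct and follows the paper's approach: the paper simply states that the theorem follows from Lemmas~\ref{lemma:maximal} and~\ref{lemma:clique_decouverte}, without a separate termination argument. Your explicit termination proof is a welcome addition; the paper leaves termination implicit, only establishing it indirectly in the subsequent complexity discussion where the configuration space is bounded by ${\cal O}(2^n m^2)$ elements.
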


In order to investigate the complexity of our algorithm, let us denote by $n = |V|$ the number of nodes and $m = |E|$ the number of links in $L$. First notice that the number of elements in the configuration space built by the algorithm is bounded by the number of subsets of $V$ times the number of sub-intervals of $T$.

Moreover, for all \dclique{} $C=\clique{X}{b}{e}$ in the configuration space, there exists a link $(b, u, v)$ or a link $(b+\Delta, u, v)$ in $E$.
Indeed, the initial trivial \dclique{}s are in the first case,
and all \dclique{}s obtained from them are also in this case until Line~\ref{alg:add_clique_time_b} is applied.
The \dclique{}s built after this are in the second case. Likewise, there exists a link $(e,u,v)$ or a link $(e-\Delta, u, v)$ in $E$. Therefore, the number of possible values for $b$ and $e$ for any \dclique{} in the configuration space is proportional to the number of time instants at which a link occurs, which is bounded by the number of links $m$. The number of sub-intervals of $T$ corresponding to a \dclique{} in the configuration space is therefore in ${\cal O}(m^2)$. This bound is reached in the worst case, for instance if the stream is a sequence of links occurring once every $\Delta$ time interval.

The trivial bound ${\cal O}(2^n)$ for the number of subsets of $V$ is also reached in the worst case, for instance if there is a link between all pairs of nodes at the same time: the algorithm will enumerate all subsets of $V$.

Therefore, the number of elements in the configuration space is in ${\cal O}(2^n m^2)$. This leads to the space complexity of our algorithm: it is proportional to the space needed to store the configuration space, which is in ${\cal O}(2^n n m^2)$ since each element may be stored in ${\cal O}(n)$ space.

\medskip

We estimate the time complexity by studying the complexity of operations performed on each element of the configuration space, (\ie{}, the complexity of each iteration of the \emph{while} loop at Lines~\ref{alg:begin_loop} to~\ref{alg:end_loop}). Let us consider a \dclique{} $C = (X, [b,e])$ picked from $S$ by the algorithm at Line~\ref{alg:get_clique}.

The {\em while} loop is composed of three blocks:
(1) searching for \dclique{}s of the form $(X\cup\{v\}, [b,e])$ larger than $C$ (Lines~\ref{alg:add_node_begin} to~\ref{alg:add_node_end});
(2) searching for a \dclique{} $(X, [b',e])$ larger than $C$ (Lines~\ref{alg:get_f} to~\ref{alg:add_clique_time_b});
and (3) searching for a \dclique{} $(X, [b,e'])$ larger than $C$ (Lines~\ref{alg:get_l} to~\ref{alg:add_clique_time_e}). The third block has the same complexity as the second one, so we focus on the time complexity of the two first blocks.

Given a node $v \notin X$,
Line~\ref{alg:check_clique_node} tests whether for all nodes $u$ in $X$ there is a link $(t,v,u)\in E$ in each time interval of duration $\Delta$.
This requires at most $|X| \cdot m$ tests, and so it is in ${\cal O}(n m)$.
Then, Line~\ref{alg:ifXvinM} searches for the found \dclique{} in $M$, which has a size in ${\cal O}(2^n m^2)$.
Since the comparison between two \dclique{}s can be performed in ${\cal O}(n)$ time,
this search therefore is in ${\cal O}(n\log(2^n m^2)) = {\cal O}(n^2 + n\log m)$ time.
The algorithm repeats these operations for all nodes $v\in V\setminus X$, and thus less than $n$ times,
hence the complexity of Lines~\ref{alg:add_node_begin} to~\ref{alg:add_clique_node} is in ${\cal O}(n(nm + n^2 + n\log m)) = {\cal O}(n^2 m+n^3)$.

Computing $f$ in Line~\ref{alg:get_f} may clearly be done with at most $m$ tests. Lines~\ref{alg:check_b_f} and~\ref{bismaxfalse} are all trivial computations.
Lines~\ref{alg:ifbpinM} and Line~\ref{alg:add_clique_time_b} are in ${\cal O}(n^2 + n\log m)$.
The complexity of Lines~\ref{alg:get_f} to~\ref{alg:add_clique_time_b} is therefore in ${\cal O}(m + n^2 + n\log m)$.

Finally, each iteration of the while loop costs at most
${\cal O}(n^2 m +n^3 + m + n^2 +n\log m) = {\cal O}(n^2 m + n^3)$ time.
We bound the overall time complexity of the algorithm by multiplying this by the number of iterations of the while loop, which is the number of elements in the configuration space.
It is therefore in ${\cal O}(2^n m^2 (n^2 m + n^3)) = {\cal O}(2^n n^2 m^3 + 2^n n^3 m^2)$.

From this analysis, we obtain the following result:

\begin{theorem}
Let  $L=(T,V,E)$ be a link stream  with $|V|=n$ and $|E|=m$, and let $\Delta$ be a duration,
then Algorithm~\ref{alg:dcliques} computes the set of all maximal \dclique s of $L$
in ${\cal O}(2^n n m^2)$ space and ${\cal O}(2^n n^2 m^3 + 2^n n^3 m^2)$ time.
\end{theorem}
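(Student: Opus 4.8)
The plan is to split the statement into its two components, correctness and complexity, and to dispatch the first immediately from the machinery already assembled. Lemmas~\ref{lemma1} and~\ref{lemma:maximal} guarantee that every element of the returned set $R$ is a maximal \dclique{}, while Lemma~\ref{lemma:clique_decouverte} guarantees that every maximal \dclique{} is returned; together these give the correctness claim (this is exactly the content of the preceding theorem). So the real work is the complexity bound, which I would obtain by first counting the elements of the configuration space built by the algorithm and then charging a cost to each iteration of the {\em while} loop.

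For the space bound I would argue that the configuration space has ${\cal O}(2^n m^2)$ elements. The factor $2^n$ is the trivial bound on the number of node sets $X \subseteq V$, and is attained when all pairs interact simultaneously. The factor $m^2$ bounds the number of admissible intervals $[b,e]$, and this is the crux. The key structural invariant is that every \dclique{} $\clique{X}{b}{e}$ produced by the algorithm satisfies: there is a link at time $b$ or at time $b+\Delta$, and symmetrically a link at time $e$ or at time $e-\Delta$. I would prove this by induction on the construction: the trivial cliques from Line~\ref{alg:init_state} have $b=e=t$ with $(t,u,v)\in E$; node additions at Line~\ref{alg:add_clique_node} leave $b$ and $e$ unchanged; and the update $b' = f-\Delta$ places a link exactly at $b'+\Delta = f$ (symmetrically $e'=l+\Delta$ places a link at $e'-\Delta=l$). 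Since every link time is one of at most $m$ instants, $b$ and $e$ each range over ${\cal O}(m)$ values, giving ${\cal O}(m^2)$ intervals. Multiplying the ${\cal O}(2^n m^2)$ elements by the ${\cal O}(n)$ cost of storing each node set yields the ${\cal O}(2^n n m^2)$ space bound.

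For the time bound I would analyse one iteration of the {\em while} loop, which processes a single configuration-space element, and then multiply by the number of such elements. The node-extension block (Lines~\ref{alg:add_node_begin} to~\ref{alg:add_clique_node}) tests up to $n$ candidates $v$; each \dclique{} test at Line~\ref{alg:check_clique_node} costs ${\cal O}(nm)$, and each membership query in $M$ at Line~\ref{alg:ifXvinM} costs ${\cal O}(n\log(2^n m^2)) = {\cal O}(n^2 + n\log m)$, so this block costs ${\cal O}(n^2 m + n^3)$. The two time-extension blocks are symmetric and each cost ${\cal O}(m + n^2 + n\log m)$ to compute $f$ (resp.\ $l$) and run the membership query. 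An iteration therefore costs ${\cal O}(n^2 m + n^3)$, and multiplying by the ${\cal O}(2^n m^2)$ iterations gives the claimed ${\cal O}(2^n n^2 m^3 + 2^n n^3 m^2)$.

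The main obstacle is the ${\cal O}(m^2)$ interval count: one must verify that the algorithm never generates an endpoint unrelated to a link time, since otherwise $b$ and $e$ could a priori range over arbitrarily many reals in $T$. I expect the inductive bookkeeping of the anchoring invariant, namely tracking which of the two cases (``link at the endpoint'' versus ``link at the endpoint shifted by $\Delta$'') holds as a clique evolves through node additions and time extensions, to be the delicate point. The remaining estimates are routine multiplications of per-line costs, and the two worst-case constructions (simultaneous links between all pairs for the $2^n$ factor, and a chain of links spaced $\Delta$ apart for the $m^2$ factor) show the bounds are essentially tight.
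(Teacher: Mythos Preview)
Your proposal is correct and follows essentially the same approach as the paper: you dispatch correctness to the earlier lemmas, bound the configuration space by ${\cal O}(2^n m^2)$ via the same anchoring invariant on the endpoints (link at $b$ or $b+\Delta$, and at $e$ or $e-\Delta$), multiply by ${\cal O}(n)$ for storage, and then charge each iteration of the {\em while} loop ${\cal O}(n^2 m + n^3)$ with the same per-block breakdown. Even your two worst-case constructions match the paper's.
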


\medskip

Notice that enumerating the maximal cliques in a graph $G = (V,E) $ is equivalent to enumerating the maximal \dclique{}s in $L= ([0,0], V, E')$ where $(0,u,v) \in E'$ if and only if $(u,v)\in E$.  The problem of enumerating maximal \dclique{}s in a link stream is therefore at least as difficult as enumerating maximal cliques in a graph, which has an exponential time complexity (in particular, there can be an exponential number of maximal cliques). Therefore any algorithm for enumerating maximal \dclique{}s in a link stream is at least exponential in the number of nodes.

Notice also that several optimizations may speed up our algorithm (without changing its worst-case complexity). In particular, $f$ and $l$, computed in Lines~\ref{alg:get_f} and~\ref{alg:get_l}, are necessarily in $[b,\min(e,b+\Delta)]$ and $[\max(b,e-\Delta),e]$, respectively. One may therefore focus the search on these intervals rather than $[b,e]$. Likewise, if $V(C)$ is the set of nodes satisfying condition of Line~\ref{alg:check_clique_node},
then the set $V(C')$ of nodes satisfying this condition for the \dclique{}s $C'$ added to $S$ at Lines~\ref{alg:add_clique_node}, \ref{alg:add_clique_time_b} and~\ref{alg:add_clique_time_e} is included in $V(C)$. One may therefore associate to each element of $S$ a set of candidate nodes to be considered at Line~\ref{alg:add_node_begin} in place of $V\setminus X$, thus drastically reducing the number of iterations of this loop.

\section{Experiments}
\label{sec:exp}


We implemented Algorithm~\ref{alg:dcliques} with the optimizations discussed above in Python (2.7) and provide the source code at~\cite{dcliquescode2014}.
We illustrate here its practical relevance by computing maximal \dclique s of the link stream
from the {\sc Thiers-Highschool} dataset,
which is a trace of
real-world contacts between individuals, captured with sensors.
It was collected at a French high school in 2012, see~\cite{Fournet2014} for full details. It induces a link stream of $181$ nodes and $45,047$ links,
connecting $2,220$ distinct pairs of nodes over a period of $729,500$ seconds (approximately $8$ days).
Each link $(t,u,v)$ means that the sensor carried by individual $u$ or $v$ detected the sensor carried by the other individual at time $t$, which means in turn that these two individuals were close enough from each other at time $t$ for the detection to happen. We call this a contact between individuals $u$ and $v$.
We also have the information of the class to which students belong.

We computed all maximal $\Delta$-cliques for $\Delta = 60$ seconds, $\Delta = 900$ seconds ($15$ minutes), $\Delta = 3,600$ seconds ($1$ hour), and $\Delta = 10,800$ seconds ($3$ hours).  We handpicked these values because of the rhythm of school day: on a typical day, courses usually last roughly two hours, with two $15$ minutes breaks during the day, and a longer $1$ hour lunch break. Our Python implementation took an hour on a standard server \footnote{A Debian machine with a 2.9 GHz CPU and 64 GB of RAM.} to obtain the results.
Although many discovered \dclique{}s are very small, we also found rather large and long ones.
See Table~\ref{tab:cliques-results} for a summary of these computations.

\begin{table}[tbph]
	\begin{center}
		\begin{tabular}{|P{0.10\linewidth}|P{0.10\linewidth}|P{0.10\linewidth}|P{0.15\linewidth}|P{0.15\linewidth}|P{0.15\linewidth}|}
		\hline
		{\vfill\bf $\Delta$ (s)\vfill} & {\vfill\bf $|R|$\vfill} & {\vfill \bf Max $|X|$ \vfill} & {\vfill\bf Max $e - b$ (s)\vfill} & {\vfill\bf Running time (s)\vfill} & {\vfill\bf Memory (MB)\vfill} \\
		\hline
		\multicolumn{1}{|r|}{60} & \multicolumn{1}{r|}{14 664} & \multicolumn{1}{r|}{5} & \multicolumn{1}{r|}{6 820} & \multicolumn{1}{r|}{150} & \multicolumn{1}{r|}{537} \\
		\multicolumn{1}{|r|}{900} & \multicolumn{1}{r|}{8 214} & \multicolumn{1}{r|}{7} & \multicolumn{1}{r|}{17 420} &\multicolumn{1}{r|}{555} & \multicolumn{1}{r|}{4 755} \\
		\multicolumn{1}{|r|}{3 600} & \multicolumn{1}{r|}{7 170}  & \multicolumn{1}{r|}{7} & \multicolumn{1}{r|}{36 340} & \multicolumn{1}{r|}{1 080} & \multicolumn{1}{r|}{23 186} \\
		\multicolumn{1}{|r|}{10 800} & \multicolumn{1}{r|}{7 416}  & \multicolumn{1}{r|}{7} & \multicolumn{1}{r|}{59  560} & \multicolumn{1}{r|}{3 100} & \multicolumn{1}{r|}{30 453} \\
		\hline
		\end{tabular}
		\caption{Experimental results for computing all maximal $\Delta$-cliques on the {\sc Thiers-Highschool} dataset. $|R|$ is the size of the set returned by our algorithm, (\ie{}, the number of \dclique s found).
For information, storing the dataset in RAM requires 51 MB.}
		\label{tab:cliques-results}	
	\end{center}
\end{table}

\medskip

We present in Figure~\ref{fig:distr}, for each value of $\Delta$, the complementary cumulative distributions for the size $|X|$ and duration $e-b$ of all maximal \dclique s $(X,[b,e])$.
By definition, larger values of $\Delta$ trivially induce larger and longer $\Delta$-cliques. Indeed, if $\Delta'>\Delta$ then every (maximal) \dclique\ also is a $\Delta'$-clique (not maximal in general). More intuitively, small values of $\Delta$ detect local bursts, but are unable to find periodic behaviors if the period is larger than $\Delta$. Notice that when $\Delta$ grows the number of maximal \dclique s generally decreases, but this is not always true, as seen in Table~\ref{tab:cliques-results}.
For an example of how the impact of $\Delta$ on the number of  maximal \dclique{}s is not trivial,
consider the stream presented in Figure~\ref{fig:example}:
it contains four maximal $1$-cliques, 
six maximal $2$-cliques,
and four maximal $3$-cliques.

\begin{figure*}[!htbp]
\centering
\includegraphics[angle=-90,width=0.45\textwidth]{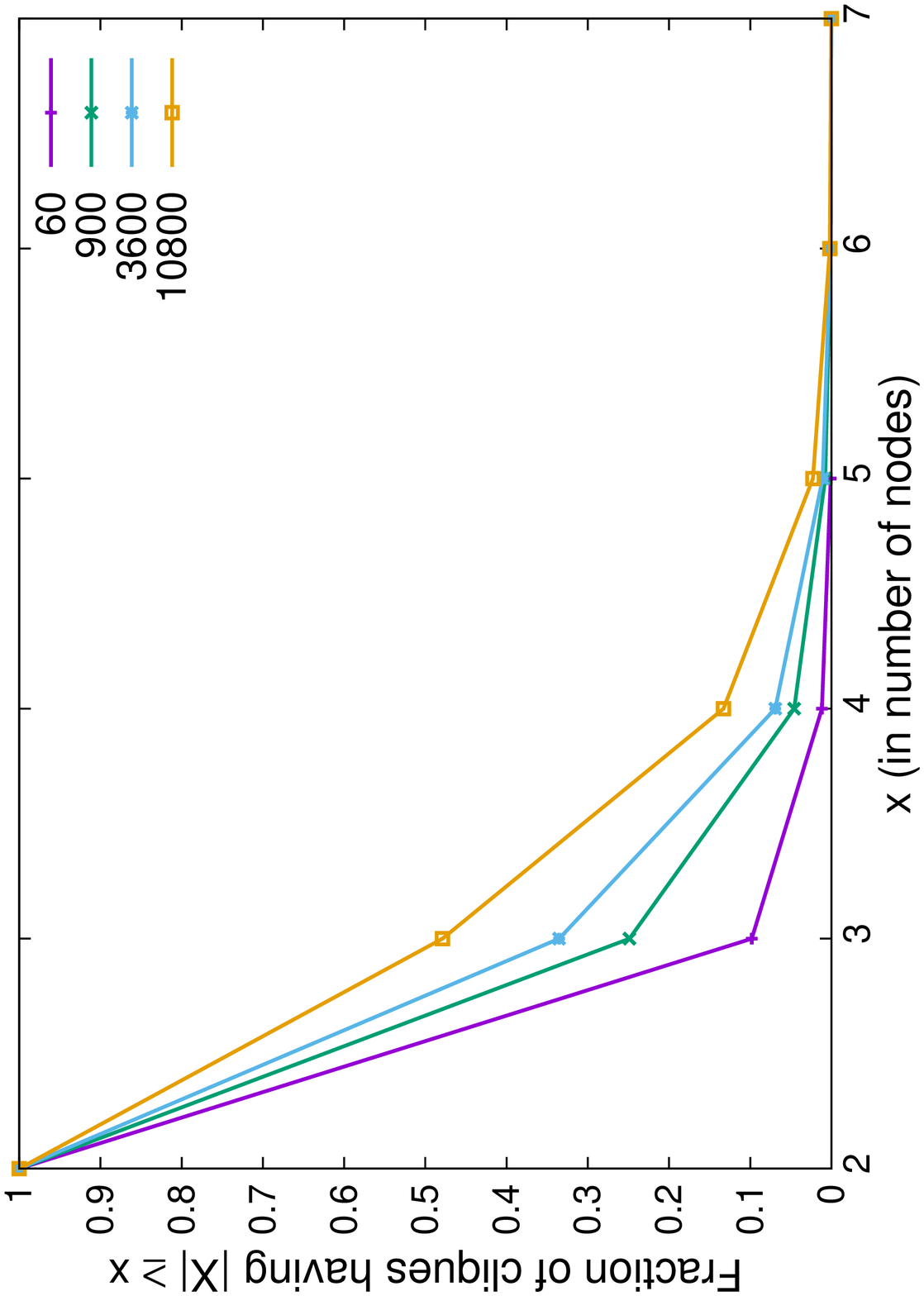}
\includegraphics[angle=-90,width=0.45\textwidth]{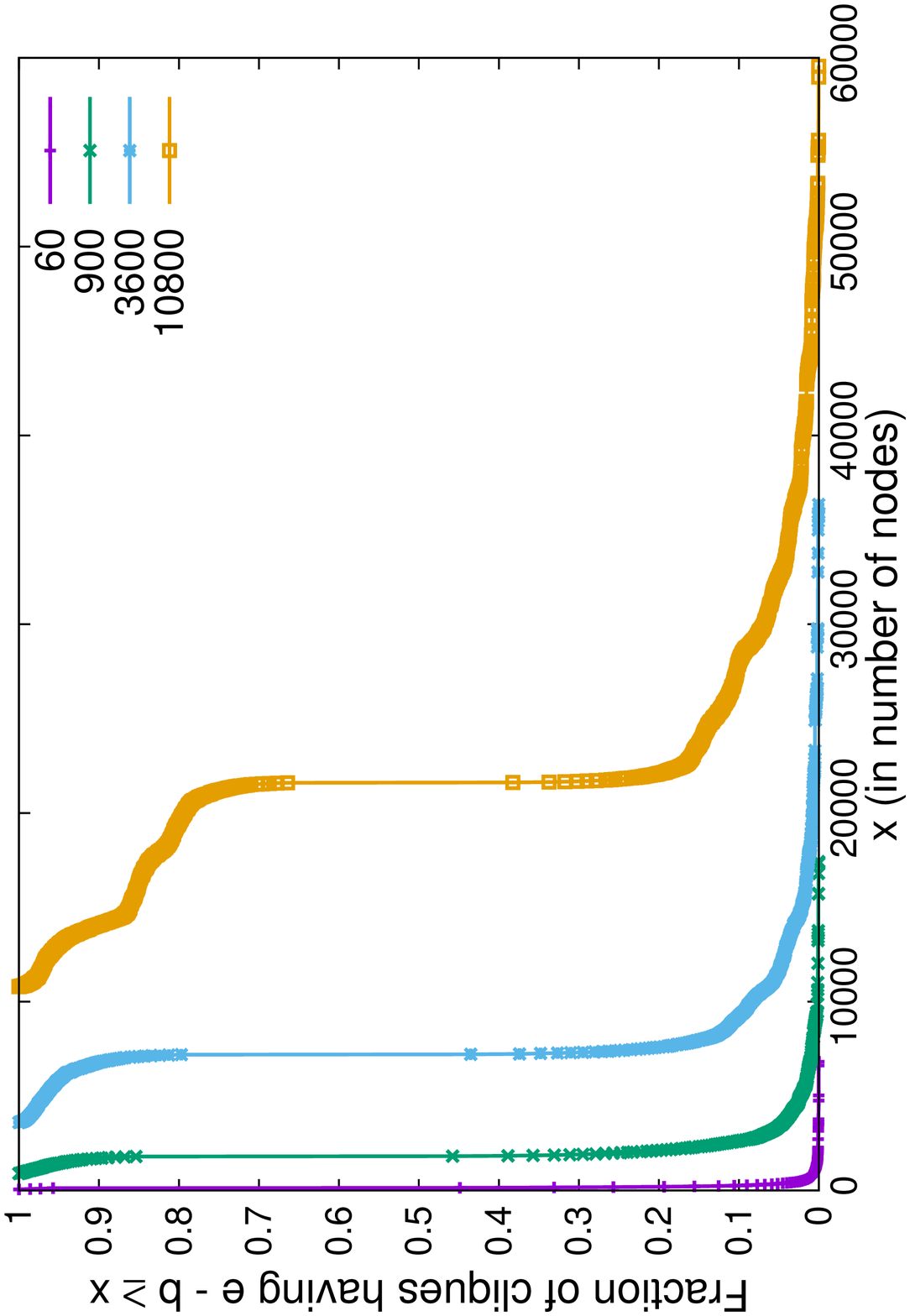}
\caption{{\bf Left:} complementary cumulative distribution of \dclique{} sizes for different values of $\Delta$.
{\bf Right:} complementary cumulative distribution of \dclique{} durations for different values of $\Delta$. The sharp drop at $2\cdot\Delta$ is due to $\Delta$-cliques involving only one link.}
\label{fig:distr}
\end{figure*}


\medskip

Notice now that Algorithm~\ref{alg:dcliques} makes no assumption on the order in which elements of $S$ are processed, which corresponds to the way we explore the configuration space. In particular, if $S$ is a first-in-first-out structure (a queue), the algorithm performs a BFS of the configuration space; if it is a last-in-first-out structure (a stack) then it performs a DFS. The execution time is essentially the same in all cases. The size of $S$ may vary, but the space complexity of the algorithm is dominated by the size of $M$, that does not change. Still, the data structure impacts the order in which \dclique{}s are found.

\begin{figure}[!h]
\centering
\includegraphics[angle=-90,width=0.49\linewidth]{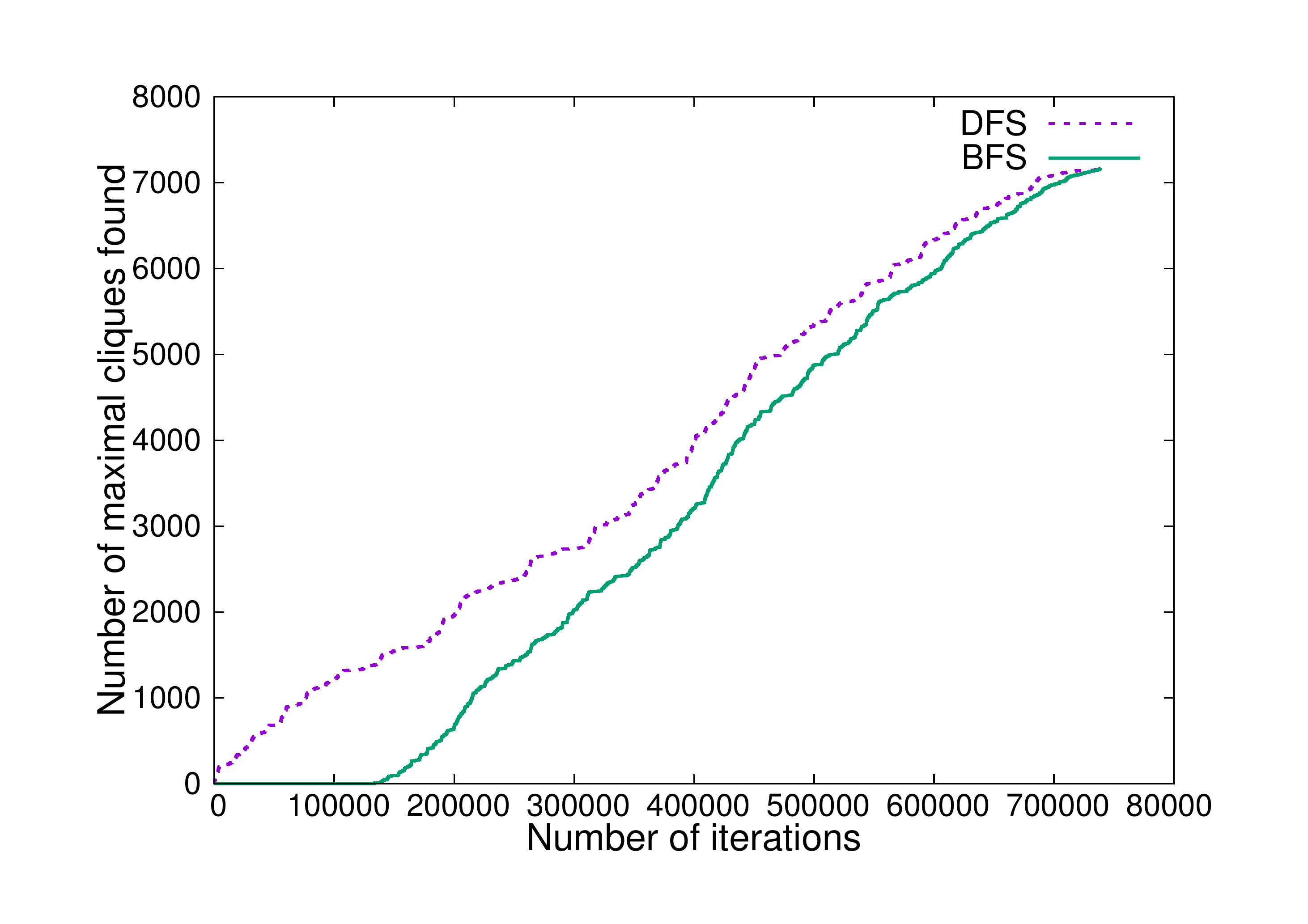}
\includegraphics[angle=-90,width=0.49\linewidth]{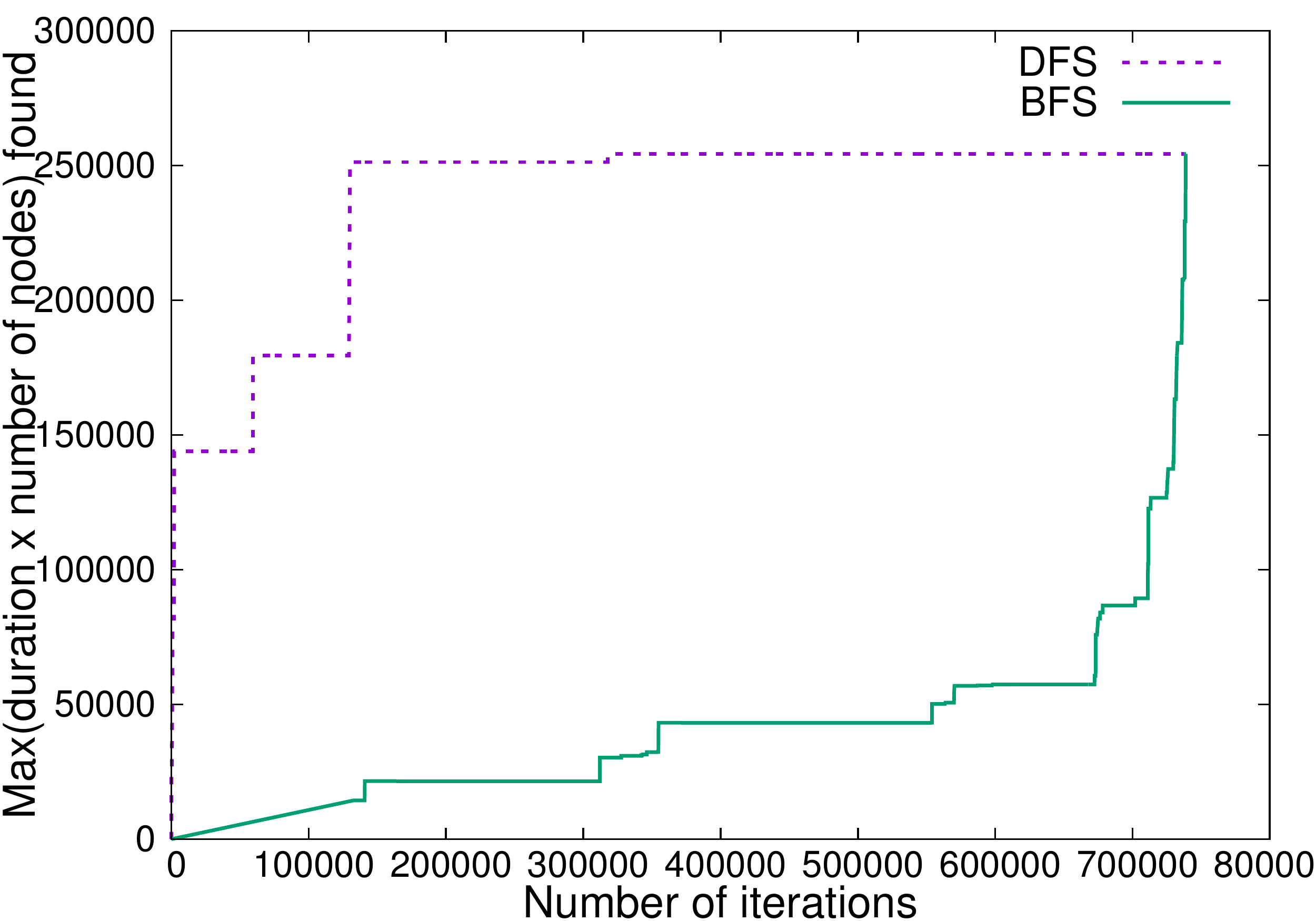}
\caption{Behavior of our algorithm depending on the way it explores its configuration space (DFS or BFS). {\bf Left}: number of maximal cliques discovered as a function of the number of iterations of the main loop of the algorithm. {\bf Right}: maximal size of discovered cliques as a function of the number of iterations of the main loop of the algorithm. A clique size is estimated here by its number of nodes times its duration (in seconds).}
\label{fig:lifo-fifo}
\end{figure}

We illustrate this in the practical case where $\Delta =$ 3600 seconds (1 \mbox{ hour}), see Figure~\ref{fig:lifo-fifo}. It shows that DFS rapidly discovers many cliques, and that those cliques are non-trivial cliques (cliques involving more than $2$ nodes or lasting a substantial amount of time). In this case, using a DFS is therefore more interesting than a BFS, as it outputs results and exhibits non-trivial \dclique{}s faster. However, this behavior is dependent on the dataset, and deciding on the most appropriate exploration strategy in a given case remains an open question.

\medskip

Consider now $\mathcal{G} = (\mathcal{V}, \mathcal{E})$ the graph induced by link stream $L = (T,V,E)$: $\mathcal{V} = V$ and $\mathcal{E} = \{(u,v):\exists (t,u,v)\in E\}$. In other words, this is the graph where a link exists between two nodes $u$ and $v$ if and only if there is at least one contact between $u$ and $v$ in the link stream. The graph $\mathcal{G}$ contains $1742$ cliques, the largest one involving $14$ nodes.
Approximately 70\% of them involve students in the same class.

If $(X,[b,e])$ is a (maximal) \dclique\ of $L$, then by definition $X$ is a clique of $\mathcal{G}$ (in general not maximal). However, as \dclique s capture time information they shed light on different patterns. For instance, $L$ contains a $60$-clique involving $4$ students of different classrooms during roughly $5$ minutes, which is likely to be the signature of a coffee break. Such \dclique s are non-trivial outputs of our algorithm, but they are invisible when considering graph cliques.

\section{Conclusion}

We introduced the notion of \dclique s in link streams, and proposed the first algorithm to compute the maximal such \dclique{}s. We implemented this algorithm and detected interesting \dclique s in real-world data.

Clearly, our algorithm may be improved further. Trying to adapt the Bron-Kerbosch algorithm~\cite{Bron1973} and some of its variants~\cite{Tomita2006,Koch2001, Cazals2008, Eppstein2013}, the most widely used algorithms for computing cliques in graphs, is particularly appealing. Indeed, the configuration spaces built by these algorithms are trees,
which avoids redundant computations.
This is achieved by maintaining a set of candidate nodes that may be added to previously discovered cliques, which does not directly translate to our situation because of time in link streams. Still, we believe that progress is possible in this direction.

We also consider the case of links with duration as a promising perspective: each link $(b,e,u,v)$ means that $u$ and $v$ interact from time $b$ to $e$. In this case there is no need for a $\Delta$ anymore, as density in this context is nothing but the probability that two randomly chosen nodes are linked together at a randomly chosen time. The definition of cliques in link streams with durations follows directly, and our algorithm may be extended to compute maximal such cliques.

\medskip

{\small\noindent


{\bf Acknowledgments.}

We warmly thank the anonymous reviewers, who helped us improve this paper much. This work is supported in part by the French {\em Direction Générale de l’Armement} (DGA), by the Thales company, by the CODDDE ANR-13-CORD-0017-01 grant from the {\em Agence Nationale de la Recherche}, and by grant O18062-44430 of the French program {\em PIA -- Usages, services et contenus innovants}.

}

\section*{References}

\bibliographystyle{abbr}

\bibliography{dcliques-algorithm}

\end{document}